\newcommand{\diam}{\mathop{\operator@font diam}}
\newtheorem{proposition}{Proposition}[section]
\newtheorem{theorem}{Theorem}[section]
\newtheorem{corollary}{Corollary}[section]
\begin{document}

\title{\Huge{\textsc{On Completeness of the Alexandrov Topology on a Spacetime: some remarks and questions}}}

\author{Kyriakos Papadopoulos$^1$, Nazli Kurt$^2$\\
\small{1. Department of Mathematics, Kuwait University, PO Box 5969, Safat 13060, Kuwait}\\
\small{2. Open University, UK}\\
E-mail: \textrm{ kyriakos.papadopoulos1981@gmail.com}
}

\date{}

\maketitle

\begin{abstract}
We clarify and discuss a misunderstanding between uniform completeness and metric completeness, that has appeared in the literature in a study on the Alexandrov topology for a spacetime.
\end{abstract}

%%%
\section{Completeness with respect to Alexandrov Topology for Spacetime: revisited.}
%%%

It will be far beyond the scope of this note to present analytically all the topological background that one needs to deal with a terminology including words like topological space, compactness, precompact set, metric space, metric-completeness, uniform space, uniform completeness, paracompactness, Hausdorff space, Tychonoff space; for the definition of these terms, we refer to the classical book of Engelking \cite{Topology}. In addition, for the definitions of spacetime, the Alexandrov topology on a spacetime, causal-diamonds, strong causality and global hyperbolicity, the classical reading of Penrose, \cite{Penrose-difftopology}, will be  more than sufficient.

It seems that the study of the Alexandrov topology, on a spacetime, is underestimated, and is limited to the following result of Theorem 4.24, of \cite{Penrose-difftopology}:

\begin{theorem}
Let $M$ be a spacetime. Then, the following are equivalent.
\begin{enumerate}
\item $M$ is strongly causal.
\item The Alexandrov topology agrees with the manifold topology.
\item The Alexandrov topology is Hausdorff.
\end{enumerate}
\end{theorem}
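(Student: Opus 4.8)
The plan is to prove the cycle of implications $(1) \Rightarrow (2) \Rightarrow (3) \Rightarrow (1)$, resting throughout on one structural fact valid for \emph{every} spacetime: each causal diamond $I^+(p) \cap I^-(q)$ is open in the manifold topology, since the chronological future $I^+(p)$ and past $I^-(q)$ of a point are manifold-open. As the Alexandrov topology is by definition generated by these diamonds, it is therefore \emph{always} coarser than or equal to the manifold topology. Consequently, the only nontrivial containment left to establish for the equality in $(2)$ is that every manifold-open set is also Alexandrov-open.

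For $(1) \Rightarrow (2)$ I would fix a point $p$ and a manifold-neighborhood $U$ of $p$, and invoke strong causality to extract a causally convex neighborhood $V \subseteq U$ of $p$, that is, one met by no causal curve in a disconnected set. Working inside a convex normal coordinate chart contained in $V$, I would select a point $r$ in the chronological past of $p$ and a point $s$ in its chronological future, both close enough to $p$ that $p \in I^+(r) \cap I^-(s)$. The role of causal convexity is precisely to prevent this diamond from leaking out of $V$ along a chronological curve that exits and re-enters the chart; hence $p \in I^+(r) \cap I^-(s) \subseteq V \subseteq U$, exhibiting $U$ as a union of Alexandrov-basic sets. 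Combined with the coarseness remark above, this yields equality of the two topologies. The implication $(2) \Rightarrow (3)$ is then immediate, since a spacetime is by definition a Hausdorff smooth manifold, so a topology coinciding with the manifold topology inherits the Hausdorff property.

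For $(3) \Rightarrow (1)$ I would argue by contraposition: assuming strong causality fails at some point $p$, I would produce a pair of points that cannot be separated by Alexandrov-open sets. Failure of strong causality at $p$ supplies a neighborhood $U$ together with a sequence of causal curves that each leave $U$ yet have endpoints converging to $p$; passing to a limit causal curve (via the standard limit-curve argument in a convex normal neighborhood) yields a causal curve through $p$ whose return behavior forces every causal diamond about $p$ to overlap a diamond about a second, distinct limit point. This overlap is exactly the obstruction to the Hausdorff separation axiom. I expect this direction, together with the local diamond construction of $(1) \Rightarrow (2)$, to be the main obstacle: both require controlling the \emph{global} behavior of causal curves near $p$ using only \emph{local} diamond data, and it is in the limit-curve compactness step that the delicate analysis concentrates.
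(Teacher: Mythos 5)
Note first that the paper contains no proof of this statement: it is quoted as Theorem 4.24 of Penrose's lectures \cite{Penrose-difftopology}, so the only meaningful comparison is with the classical argument there, and your proposal follows essentially that route --- the cycle $(1)\Rightarrow(2)\Rightarrow(3)\Rightarrow(1)$, with $T_A\subseteq T_M$ automatic from openness of $I^{\pm}$, the causally convex neighborhood plus local diamond construction for $(1)\Rightarrow(2)$, and contraposition for $(3)\Rightarrow(1)$. Your $(1)\Rightarrow(2)$ step is sound, and the point you flag is exactly right: if $x\in I^{+}(r)\cap I^{-}(s)$ globally, the concatenated timelike curve from $r$ through $x$ to $s$ begins and ends in the causally convex $V$, hence meets $V$ in a connected set containing both endpoints, hence lies entirely in $V$, so the diamond cannot leak out. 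The one place your sketch is heavier and vaguer than necessary is $(3)\Rightarrow(1)$: no limit \emph{curve} is needed, only a limit \emph{point}. The standard argument takes causal curves $\gamma_n$ from $x_n$ to $y_n$ with $x_n, y_n \to p$, each forced (by choosing the shrinking neighborhoods to be local diamonds in a convex normal neighborhood $N$) to exit a fixed compact ball, lets $z_n$ be the first exit points, and extracts $z_n \to z \neq p$ by compactness of the boundary sphere. Then for any diamonds $I^{+}(a)\cap I^{-}(b)\ni p$ and $I^{+}(c)\cap I^{-}(d)\ni z$, openness of the chronology relation gives $a \ll x_n$ and $y_n \ll b$ for large $n$, the push-up property ($\ll$ composed with $\le$ stays $\ll$) gives $a \ll z_n \ll b$ since $x_n \le z_n \le y_n$ along $\gamma_n$, and $z_n \in I^{+}(c)\cap I^{-}(d)$ for large $n$ by openness; so $z_n$ witnesses the failure of Hausdorff separation of $p$ and $z$. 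Your appeal to a ``standard limit-curve argument'' would still leave you to extract precisely this two-point overlap, which your sketch leaves implicit; spelling out the push-up step and the exit-point compactness is where the actual content of this direction lies, and it is more elementary than limit-curve theorems.
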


An attempt to link the Alexandrov topology to metric completeness of a spacetime appeared in \cite{McW}, with an intention to examine possible implications in geodesic completeness. The authors of \cite{KP} used this result in order to show that in a sliced spacetime, where the product topology coincides with the Alexandrov topology, the completeness of a slice with respect to Alexandrov topology is equivalent to the spacetime being globally hyperbolic.

 It is unfortunate  that there is a confusion in the literature between the terms uniform completeness and metric completeness, which is due to the occasionally unlucky use of the term “topological completeness”, for both notions. In fact, metric spaces are uniform (and metrizable spaces are uniformizable complete), but not vice-versa. The purpose of this note is to expose this confusion in the powerful result of \cite{McW} and to set a frame for a discussion on the use of uniform completeness in the study of the convergence of causal curves.

We denote by $T_A$ the Alexandrov topology on a spacetime $(M,g)$ and by $T_A(X)$ the induced topology on a subset $X$, of $M$. \footnote{It is reasonable to consider subsets $X$, in which the spacetime metric $g$ is nondegenerate at each point.} Below we state the theorem which is the main motivation for our discussion.

\begin{theorem}[McWilliams]\label{1}
Let $(M,g)$ be a spacetime. Then, for any subset $X \subset M$, $T_A(X)$ is complete, if and only if $X$ is strongly causal. 
\end{theorem}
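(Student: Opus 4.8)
The first thing I would do is decide what ``complete'' is to mean, since for a bare topological space the word has no unambiguous sense. I would read it as \emph{topological} (uniform) completeness, i.e.\ that $T_A(X)$ admits a complete uniformity --- equivalently, when it is metrizable, that it admits a complete metric. With that reading fixed, the whole argument can be funnelled through the Penrose equivalence (Theorem 4.24 of \cite{Penrose-difftopology}) stated above, which ties strong causality of a spacetime to its Alexandrov topology being Hausdorff and coinciding with the manifold topology.

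For the forward implication I would argue contrapositively and cheaply. Any notion of completeness worth the name is defined only for, or immediately forces, a separated ($T_2$) space: a complete metric space is Hausdorff, and a complete separated uniform space is Hausdorff by definition. Hence if $T_A(X)$ is complete it is Hausdorff, and the Penrose equivalence yields at once that $X$ is strongly causal. This direction I expect to be routine.

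The reverse implication is where the geometry enters, though it too reduces to soft topology. If $X$ is strongly causal then, by the Penrose equivalence applied to $X$ --- which presupposes that $X$ is itself a spacetime, hence an open submanifold carrying the induced metric and manifold topology --- $T_A(X)$ coincides with the manifold topology on $X$. That topology is locally compact and Hausdorff, and under the standing assumptions on a spacetime (second countable, paracompact) it is metrizable; a locally compact, second countable, Hausdorff space is Polish, hence completely metrizable, while more generally a paracompact Hausdorff space is Dieudonn\'e complete, i.e.\ carries a complete uniformity. Either way $T_A(X)$ is topologically complete, which is what I would take the conclusion to assert.

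The main obstacle, and the point on which I would expect the argument to stand or fall, is not any single step but the meaning of ``complete'' itself. Metric completeness is \emph{not} a topological invariant: one and the same metrizable topology carries both complete and incomplete metrics (compare $\mathbb{R}$ with a bounded open interval), so a statement of the form ``$T_A(X)$ is complete'' is only well posed once a uniformity, and not merely a topology, has been specified. The proof above succeeds precisely because it silently upgrades ``complete'' to the topological/uniform notion; if instead one insists on completeness of a particular induced metric, the reverse implication can fail, and this is exactly the slippage between uniform completeness and metric completeness that the present note means to isolate. I would also scrutinise the accompanying set-theoretic caveat: the reverse direction's appeal to local compactness genuinely needs $X$ to be open, since for an arbitrary subset (a copy of $\mathbb{Q}$ being the cautionary example) the induced topology can be Hausdorff and metrizable yet not completely metrizable, so one should either restrict to open $X$ or to $X$ that are themselves spacetimes before invoking Theorem 4.24 on the induced topology $T_A(X)$.
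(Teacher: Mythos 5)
Your reverse direction reproduces the paper's own argument almost verbatim: strong causality gives $T_A(X)=T_M(X)$ by Penrose's Theorem 4.24, this topology is Tychonoff and paracompact, and paracompact uniformizable spaces are topologically complete in the sense of Page \cite{Page} (p.~54); your Polish-space variant for open $X$ parallels the paper's Remark 1. The genuine gap is in your forward direction, and it is exactly the gap the paper was written to expose. Having fixed ``complete'' to mean ``admits a compatible complete uniformity,'' you cannot conclude Hausdorffness: completeness and separatedness are independent properties of a uniformity (the indiscrete uniformity generated by $\{X\times X\}$ is complete on any set, since every filter is Cauchy and every filter converges, yet it induces a non-Hausdorff topology whenever $X$ has at least two points). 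Your sentence ``a complete separated uniform space is Hausdorff by definition'' extracts Hausdorffness from separatedness alone --- completeness plays no role --- and separatedness is precisely what needs to be established; the step is circular. The paper states this point explicitly: if the intended notion were uniform completeness, ``the Hausdorffness of $T_A(X)$ would not be guaranteed,'' and McWilliams' $\Rightarrow$ goes through only because he takes completeness to mean \emph{metric} completeness by definition (``complete, then by definition metric, hence Hausdorff, hence strongly causal''). If you instead stipulate that only separated uniformities count, the Hausdorff step becomes trivial, but that stipulation must be made explicit and is exactly the choice McWilliams failed to make coherently.

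Moreover, your opening parenthetical --- that for metrizable spaces admitting a complete uniformity is ``equivalently'' admitting a complete metric --- is false, and your own closing example refutes it: $\mathbb{Q}$ is metrizable and paracompact, hence topologically complete in Page's sense (its fine uniformity is complete), yet it carries no complete compatible metric. This false equivalence is the very conflation of uniform and metric completeness that the note isolates (``Metric completeness implies uniform completeness, but not vice versa''). The upshot: under the uniform reading your $\Rightarrow$ fails as written; under the metric reading your $\Rightarrow$ is just McWilliams' original argument, while your $\Leftarrow$ fails for non-open $X$ --- again by your own $\mathbb{Q}$ caveat. There is no single sense of ``complete'' under which both of your directions, as you wrote them, are simultaneously valid. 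That is in fact the thesis of the paper: Theorem \ref{1} is not proved there but diagnosed as internally inconsistent, so a blind proposal that purports to prove it as stated necessarily inherits the inconsistency; your final paragraph correctly locates the ambiguity, but the proof preceding it does not, and cannot, close it.
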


Given the $\Rightarrow$ direction of the proof, it is certain that when the author mentions ``completeness'', he refers to ``metric completeness'', as the proof states ``If $T_A(X)$ is complete, then by definition $T_A(X)$ is metric, hence Hausdorff, hence strongly causal''. 

If the intention was to talk about topological (in other words uniform) completeness, the Hausdorfness of $T_A(X)$ would not be guaranteed.
 
 %Unfortunately, the Hausdorffness of $T_A(X)$ (and hence its strong causality) is not guaranteed at all, as there is an unfortunate confusion between the notions ``completely uniformizable space'' (known also as Dieudonné complete space) and ``completely metrizable space'', both (unluckily) often carrying the same name ``topologically complete'' or, even worse ``complete''.

We now focus on %the main problem of Theorem \ref{1}, which lies in 
the $\Leftarrow$ direction of the proof. That the strong causality of $X$ implies that $T_A(X)$ is Tychonoff ($T_1$-completely regular) is trivial, and comes from the fact that in the case of $X$ being strongly causal, $T_A(X) = T_M(X)$, where $T_M$ is the manifold topology on $(M,g)$ (see \cite{Penrose-difftopology}, p. 34). So, since every subspace of a metric space is metric and since metric spaces are Tychonoff, the first result follows naturally. Now, Tychonoff spaces are uniformizable (see \cite{Topology}) and since $T_M(X)$ is paracompact (as a metric space), $T_A(X)$ will be paracompact, too. Paracompact uniformizable spaces, though, are ``topologically complete'', according to W. Page, in \cite{Page} (p. 54), where by topological completeness it is meant uniformizable completeness; i.e. there is a complete uniformity $\mathcal{U}$ which is compatible with the topology $T_A(X)$. Since not every uniform space is metrizable (see \cite{Topology}), the $\Leftarrow$ direction of Theorem \ref{1} is inconsistent with the proof of the $\Rightarrow$ direction. 
\\

{{\bf Remarks:}} 

\begin{enumerate}

\item The $\Leftarrow$ direction of the proof is actually trivial, if the intention is to show metric completeness.
Since $X$ is strongly causal, the Alexandrov topology $T_A(X)$ on $X$ agrees with the manifold topology $T_M(X)$ on $X$. Since also the manifold topology is metric complete (naturally, one considers a spacetime to be second countable) and every open subset of a manifold is a manifold itself, $T_M(X)$ will be metric complete, too. Metric completeness implies uniform completeness, but not vise versa; that is why the proof of the $\Leftarrow$ direction of Therem \ref{1} is inconsistent.

\item If the intention of $\Leftarrow$ is to show uniform completeness, indeed, then one would have to introduce conditions, so that uniform completeness to imply strong causality, in the $\Rightarrow$ direction (in 1. we have already shown that strong causality implies uniform completeness). In other words, one would have to introduce conditions, so that $X\subset M$ being uniform complete implies that $T_M(X)$ coincides with $T_A(X)$.

\item In the unusual (but not to be neglected) case that the spacetime manifold $M$ is not considered to be second countable,  there will be no globally defined derivative operator; instead, the manifold structure will admit derivative operators locally. In this particular case, the manifold topology would not be metric complete and uniform completeness might play a role, instead. In general, a space being metric complete does not imply that every subset with the subspace topology will be metric complete, too. It is a well-known result that if $X$ is a closed subset of a metric complete space, then $X$ is complete, too, but metric completeness is not inherited by all subsets. Furthermore, if one abstracts an event set equipped with the usual chronological and causal order, from a spacetime manifold (see \cite{Kron}), one could associate with the event set a uniform structure, and study uniform convergence, completeness and continuity, abstractly and far and away from the metric structure of a manifold. Since being uniform complete is a weaker condition than being metric complete, it might be that uniform completeness will play a role in the study of the global causality properties of a spacetime and, in particular, the convergence of causal curves being studied in the frame of filters, rather than sequences of points.  So, a theorem, connecting the causal structure of spacetime with uniform completeness will be desirable.

\end{enumerate}

{{\bf Some Further Remarks (on reasonable spacetimes):}}

Here we adopt the definition of Hounnonkpe-Minguzzi for a reasonable spacetime: a spacetime is reasonable, if it is noncompact and of dimension strictly greater than $2$ (see \cite{Minguzzi}).

Let $\mathcal{C}$ denote the space of smooth endless causal curves, equipped with topology $T^0$, defined  via basic-open sets as follows (see \cite{Low}); consider  a smooth endless causal curve $\Gamma$ and a point $x \in \Gamma$. Consider an open-neighbourhood $N$ of $x$ in the manifold topology $T_M$, and let $U$ be the set of all smooth endless causal curves passing through  $N$. Then, $U$ is a basic-open set of the topology $T_0$, which is the natural topology on $\mathcal{C}$ encapsulating pointwise convergence as well as convergence of tangent directions.

The following two propositions talk about global hyperbolicity against topological properties of the space $\mathcal{C}$ (see \cite{Low}).

\begin{proposition}[Low]\label{L1}
A strongly causal spacetime $M$ is globally hyperbolic, if and ony if $\mathcal{C}$ is Hausdorff.
\end{proposition}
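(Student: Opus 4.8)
The plan is to route both implications through the limit curve theorem, which is the natural dictionary between convergence in the topology $T^0$ and the causal structure of $M$. The first step is to unwind what failure of the Hausdorff property means in $\mathcal{C}$: two distinct endless causal curves $\Gamma_1 \neq \Gamma_2$ are topologically inseparable precisely when every basic neighborhood of $\Gamma_1$ meets every basic neighborhood of $\Gamma_2$. Since the generating sets $U_{x,N}$ are indexed by a point $x$ on a curve and a manifold-neighborhood $N$ of $x$, and since a space is non-Hausdorff exactly when some net converges to two distinct points, I would first establish that inseparability of $\Gamma_1$ and $\Gamma_2$ is equivalent to the existence of a net $\sigma_\alpha$ of smooth endless causal curves converging, in the pointwise-and-tangent sense encoded by $T^0$, to both $\Gamma_1$ and $\Gamma_2$. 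Thus non-Hausdorffness of $\mathcal{C}$ is precisely the phenomenon of a convergent family of causal curves admitting two distinct limit curves.

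For the direction ``$M$ globally hyperbolic $\Rightarrow \mathcal{C}$ Hausdorff'' I would argue by contradiction using this reformulation. Suppose $\sigma_\alpha \to \Gamma_1$ and $\sigma_\alpha \to \Gamma_2$ with $\Gamma_1 \neq \Gamma_2$, and pick a point $x \in \Gamma_1$ not lying on $\Gamma_2$. Because a globally hyperbolic spacetime is causally simple, the images of endless causal curves are closed and the causal relation is closed, which furnishes a manifold-neighborhood $N$ of $x$ whose closure misses $\Gamma_2$. I would then invoke compactness of the relevant causal diamond $J^+(p) \cap J^-(q)$ to confine the tails of the net $\sigma_\alpha$ and force the two candidate limits to agree on $N$, contradicting $x \notin \Gamma_2$. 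The essential mechanism is that compact diamonds rule out the escape-to-infinity that would allow a single convergent family to split into two limits.

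For the converse, ``$\mathcal{C}$ Hausdorff $\Rightarrow M$ globally hyperbolic'', I would prove the contrapositive. Assuming $M$ is strongly causal but not globally hyperbolic, some causal diamond $J^+(p) \cap J^-(q)$ fails to be compact; by the limit curve theorem this produces a sequence of causal curves running from near $p$ to near $q$ whose limiting behaviour is not a single inextendible curve but branches, yielding two distinct endless causal limit curves $\Gamma_1$ and $\Gamma_2$. Since this sequence (a net) converges to each of them in $T^0$, no pair of neighborhoods separates $\Gamma_1$ from $\Gamma_2$, so $\mathcal{C}$ is not Hausdorff. Here the strong-causality hypothesis is used to ensure the construction takes place in a region where the Alexandrov and manifold topologies agree, so that the limit curves are genuine smooth endless causal curves belonging to $\mathcal{C}$.

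The main obstacle I anticipate is the careful application of the limit curve theorem in both directions: in the forward direction one must show that compactness of causal diamonds forces \emph{uniqueness} of the limit, not merely existence, while in the reverse direction one must extract from non-compactness a genuinely branching pair of limits rather than a single degenerate one. Controlling the tangent-direction component of $T^0$-convergence, as opposed to mere pointwise convergence of the underlying point sets, is the delicate technical point, and it is precisely where the smoothness and endlessness assumptions on the curves in $\mathcal{C}$ must be treated with care.
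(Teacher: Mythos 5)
First, note that the paper itself offers no proof of this proposition: it is imported verbatim from Low \cite{Low}, so your proposal can only be measured against the standard argument in the literature. Your converse direction (``$\mathcal{C}$ Hausdorff $\Rightarrow$ globally hyperbolic'', via the contrapositive) does follow the standard limit-curve route in outline: a non-compact diamond $J^+(p)\cap J^-(q)$ yields connecting causal curves whose limits branch into a future-inextendible limit curve from $p$ and a past-inextendible one from $q$, and after extending everything to endless curves the approximating sequence $T^0$-converges to both, killing Hausdorffness. Two caveats there: your stated role for strong causality (``so that the Alexandrov and manifold topologies agree'') is not its actual function --- the Alexandrov topology plays no role in $T^0$; strong causality enters because global hyperbolicity is strong causality plus compact diamonds (so the contrapositive hypothesis is exactly a non-compact diamond) and because non-imprisonment is what makes the limit curves genuinely inextendible and the two branches distinct. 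Also, limit curves are in general only Lipschitz causal curves, while $\mathcal{C}$ as defined consists of \emph{smooth} endless causal curves, so exhibiting the two limits as elements of $\mathcal{C}$ requires a smoothing or inseparability-of-nearby-smooth-curves step that you flag but do not supply.

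The genuine gap is in the forward direction. Having found $x\in\Gamma_1\setminus\Gamma_2$ and a neighbourhood $N$ of $x$ whose closure misses $\Gamma_2$ buys you nothing by itself, because the basic $T^0$-sets only record which open sets a curve \emph{must meet}, never where a curve must not go: every $T^0$-neighbourhood of $\Gamma_2$ contains plenty of curves that also wander through $N$. So ``confining the tails of the net by compactness of a diamond'' does not force the limits to agree on $N$; indeed $x$ typically lies inside the diamond spanned by points of $\Gamma_2$, and nothing in your sketch contradicts the net meeting $N$ while converging to $\Gamma_2$. What is missing is a mechanism that forbids a \emph{single causal curve} from meeting both a neighbourhood of a point of $\Gamma_1$ and a neighbourhood of a point of $\Gamma_2$. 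The clean way to get it is to use the full strength of global hyperbolicity: take a Cauchy time function; two distinct endless causal curves cross some common Cauchy surface $S$ at distinct points $x\in\Gamma_1$, $y\in\Gamma_2$, which are acausally related since $S$ is acausal; by closedness of the causal relation $J$ in a globally hyperbolic spacetime there exist neighbourhoods $N_x$, $N_y$ with no causal relation between them, and then the basic sets of curves through $N_x$ and through $N_y$ are \emph{disjoint}, since a causal curve meeting both would causally relate a point of $N_x$ to a point of $N_y$. This acausal-separation step is the heart of the forward implication, and your proposal, which relies only on closedness of curve images (already true under strong causality, so causal simplicity is a red herring there) plus an unspecified compactness argument, does not reach it.
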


\begin{proposition}[Low]\label{L2}
$M$ is globally hyperbolic, if and only if $\mathcal{C}$ is metrizable.
\end{proposition}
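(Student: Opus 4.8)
\noindent\textbf{A proof strategy for Proposition \ref{L2}.}
The plan is to prove both implications by reducing, as far as possible, to Proposition \ref{L1}, and then to supply the extra topological input needed to pass between the Hausdorff property and full metrizability. Throughout I freely use the standard facts that global hyperbolicity implies strong causality and that a metrizable space is Hausdorff.

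First I would treat the forward direction. Assume $M$ is globally hyperbolic. Since global hyperbolicity implies strong causality, Proposition \ref{L1} immediately gives that $\mathcal{C}$ is Hausdorff; the task is then to upgrade ``Hausdorff'' to ``metrizable''. For this I would exploit the Cauchy-surface splitting: a globally hyperbolic $M$ admits a smooth spacelike Cauchy surface $\Sigma$ and a diffeomorphism $M \cong \mathbb{R} \times \Sigma$, and every endless causal curve meets $\Sigma$ in exactly one point. This yields a natural map $\Phi \colon \mathcal{C} \to B\Sigma$ sending a curve $\Gamma$ to the pair consisting of its unique intersection point $x = \Gamma \cap \Sigma$ and its future-pointing causal tangent direction at $x$, where $B\Sigma$ denotes the bundle over $\Sigma$ whose fibre is the (compact) space of future-pointing causal directions. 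I would then show that $\Phi$ is a bijection and a homeomorphism when $\mathcal{C}$ carries $T^0$: the key point is that the basic-open sets of $T^0$ — curves passing through a manifold-open neighbourhood, together with control on tangent directions — correspond exactly to the product-type open sets of $B\Sigma$. Since $\Sigma$ is a second countable smooth manifold and the fibre is compact and metrizable, $B\Sigma$ is second countable, regular and Hausdorff, hence metrizable by Urysohn's theorem; transporting the metric through $\Phi$ makes $\mathcal{C}$ metrizable.

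For the converse I would argue in two stages, since Proposition \ref{L1} only applies once strong causality is known. Suppose $\mathcal{C}$ is metrizable, so in particular $\mathcal{C}$ is Hausdorff. The first stage shows that $M$ must then be strongly causal: if strong causality failed at some point $p$, one could produce causal curves that repeatedly return to a neighbourhood of $p$, giving a sequence of endless causal curves with two distinct accumulation curves in $T^0$, which contradicts the Hausdorff (indeed metric) property of $\mathcal{C}$. Once $M$ is known to be strongly causal, the Hausdorffness of $\mathcal{C}$ together with Proposition \ref{L1} yields that $M$ is globally hyperbolic, completing the equivalence.

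The main obstacle I expect is the homeomorphism claim for $\Phi$ in the forward direction. Verifying that $T^0$ coincides with the topology pulled back from $B\Sigma$ — that convergence of curves in the pointwise-plus-tangent sense is equivalent to convergence of their intersection points and directions on $\Sigma$ — requires the limit-curve machinery of global hyperbolicity (uniform convergence of causal curves between compact sets), and it is here that global hyperbolicity is genuinely used rather than mere strong causality. The non-Hausdorff construction in the converse is delicate as well, but it is a standard consequence of the failure of strong causality and is less central than establishing second countability and the correct topology on $\mathcal{C}$.
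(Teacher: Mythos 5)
There is a genuine gap, and it is fatal to your forward direction: the map $\Phi \colon \mathcal{C} \to B\Sigma$ is not a bijection. A point $x \in \Sigma$ together with a future-pointing causal direction at $x$ determines a unique \emph{geodesic}, but $\mathcal{C}$ is the space of \emph{all} smooth endless causal curves, and uncountably many distinct such curves pass through the same point of $\Sigma$ with the same tangent direction there, differing arbitrarily away from $\Sigma$. So $\Phi$ is massively non-injective, and the claimed homeomorphism would exhibit $\mathcal{C}$ as a finite-dimensional bundle over $\Sigma$, which it is not. Your strategy would be exactly right for Low's space of endless \emph{null geodesics}, where the identification with the bundle of null directions over a Cauchy surface is standard; it cannot work for $\mathcal{C}$. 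Note that the paper offers no proof of Proposition \ref{L2} --- it is quoted from Low --- but it does record the shape of Low's actual argument: one constructs a metric on $\mathcal{C}$ directly that induces $T^0$. The natural repair along those lines is to use the splitting $M \cong \mathbb{R} \times \Sigma$ to parametrize every endless causal curve by the global time function, thereby realizing $\mathcal{C}$ inside $C(\mathbb{R}, \Sigma)$ (with tangent data adjoined as needed), and to equip it with the standard metric of uniform convergence on compact sets; limit-curve arguments, available under global hyperbolicity, then show this metric topology coincides with $T^0$. Alternatively, one could try Urysohn's metrization theorem applied to $\mathcal{C}$ itself, after establishing second countability and regularity of $T^0$ --- which is closer in spirit to your appeal to Urysohn, but applied to $\mathcal{C}$ rather than to $B\Sigma$.

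Your converse, by contrast, is sound in outline. Metrizability gives Hausdorffness, and since Proposition \ref{L1} carries strong causality as a hypothesis, you correctly identify that strong causality must be established separately before Proposition \ref{L1} can be invoked; the limit-curve argument you sketch --- a failure of strong causality at $p$ yields a sequence of endless causal curves with two distinct accumulation curves in $T^0$, contradicting the Hausdorff property --- is the right mechanism for discharging that hypothesis. The error is thus localized entirely in the bijectivity claim for $\Phi$ in the forward direction, and any correct proof must replace the finite-dimensional bundle picture with a genuinely function-space argument for the metric on $\mathcal{C}$.
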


In particular, for the proof of Proposition \ref{L2}, Low constructs a metric that induces the topology $T^0$.

In order to talk about a connection between the (seemingly different) topologies $T_A$ and $T^0$, for reasonable spacetimes, we will need to use the following result (see \cite{Minguzzi}).

\begin{theorem}[Hounnonkpe-Minguzzi]\label{Minguzzi}
 A non-compact spacetime of dimension strictly greater than $2$ is globally hyperbolic, if and only if the causal diamonds are compact in the Alexandrov topology, $T_A$.
 \end{theorem}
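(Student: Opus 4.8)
The plan is to split the equivalence and treat the two implications asymmetrically, since one is essentially definitional while the other carries all the content. For the forward implication I would start from global hyperbolicity and recall that it forces strong causality; by the first theorem quoted above (Penrose, \cite{Penrose-difftopology}) this already gives $T_A=T_M$, so the causal diamonds $J^+(p)\cap J^-(q)$, which are compact by the very definition of global hyperbolicity, are compact in $T_A$ as well. This direction uses neither the dimension nor the non-compactness hypothesis, and I would dispose of it first.

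For the reverse implication I would reduce to a pure causality statement. The key input is the classical characterization (see \cite{Penrose-difftopology, Minguzzi}) that a spacetime is globally hyperbolic if and only if it is causal and its causal diamonds are compact. Granting this, it suffices to prove that, for a non-compact spacetime of dimension $>2$, compactness of the causal diamonds already implies causality; the compact diamonds then supply the second half of the classical criterion for free. I would therefore argue by contradiction: assume the diamonds are compact but causality fails, so that some point lies on a closed causal curve $\gamma$.

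The substantive step, and the one I expect to be the main obstacle, is to convert this closed causal curve into a contradiction using exactly the two standing hypotheses. First I would promote $\gamma$ to produce either a nonempty open chronology-violating region or, at the least, a causal curve totally imprisoned in a single compact diamond $J^+(p)\cap J^-(p)$. The heart of the matter is then a transversality, or ``room to move'', argument: in dimension strictly greater than $2$ the directions transverse to $\gamma$ span a space of dimension at least $2$, so one can deform $\gamma$ through a family of causal curves that, because $M$ is non-compact, cannot all remain inside a fixed compact diamond. This should contradict the compactness of the diamond containing $\gamma$. I would expect the careful construction of this escaping family --- and the verification that the deformed curves remain causal while leaving every compact set --- to be where essentially all the work lies; it is also precisely the point at which both dimension $>2$ and non-compactness are indispensable, since the $2$-dimensional case (for instance suitable cylinders) is known to furnish counterexamples in which the diamonds stay compact yet causality fails.

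Finally I would close the loop. Once causality is established, the classical criterion yields global hyperbolicity, and global hyperbolicity in turn gives strong causality, hence $T_A=T_M$ by Penrose's theorem; this retroactively removes any ambiguity about which topology the compactness was measured in, since on a strongly causal spacetime $T_A$-compactness and $T_M$-compactness of the diamonds coincide. Thus the $T_A$-formulation in the statement is consistent with the manifold-topology formulation of the classical theory, and the stated equivalence follows.
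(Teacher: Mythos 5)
Note first that the paper itself offers no proof of this statement --- it is imported verbatim from \cite{Minguzzi} --- so your proposal can only be measured against the published Hounnonkpe--Minguzzi argument. Your forward direction is correct, and your reduction of the converse, via the classical criterion (causality plus compact diamonds implies global hyperbolicity) to the single claim that compact diamonds force causality, does match the architecture of the cited proof. But your plan contains a circularity over which topology the compactness lives in. The force of the statement, and the way this paper exploits it in Corollary \ref{3}, is that compactness is assumed only in the Alexandrov topology $T_A$, which is in general coarser than $T_M$; hence $T_A$-compactness is a priori a \emph{weaker} hypothesis than the manifold-topology compactness your classical criterion requires. You propose to identify $T_A$ with $T_M$ ``retroactively'' via strong causality, but strong causality is precisely an output of the theorem: you cannot invoke $T_A = T_M$ before it has been proved. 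A correct proof must either show early on that $T_A$-compact diamonds force strong causality, or run the entire causality argument in $T_A$.

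Second, and more seriously, the step you yourself identify as carrying all the content is a gesture rather than an argument, and as sketched it would fail. A causal curve from $p$ to $q$ lies in $J^+(p)\cap J^-(q)$ by definition, so any causal deformation of your closed curve $\gamma$ whose endpoints stay near $\gamma$ remains inside a slightly enlarged --- still compact --- diamond automatically; no family of causal deformations ``cannot all remain inside a fixed compact diamond'', and the non-compactness of $M$ is a statement about the far-away structure of the manifold that gives no lever near $\gamma$. Imprisonment of causal curves in compact subsets of non-compact spacetimes is a well-known phenomenon with plenty of transverse room (Misner-type geometries crossed with flat factors have dimension $>2$, are non-compact, and contain closed causal curves; what fails there is diamond compactness, which your sketch never brings to bear structurally). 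In the published proof the two hypotheses enter quite differently: non-compactness first excludes total viciousness (if $M = I^+(p)\cap I^-(p)$ then $M \subseteq J^+(p)\cap J^-(p)$ would be compact), and the remaining chronology violations are excluded by analyzing a chronology class $I^+(r)\cap I^-(r)$, whose closure is compact by diamond compactness, together with its achronal boundary --- it is in that boundary analysis that the dimension restriction is used, the two-dimensional null cylinder showing it is sharp. So the heart of the theorem is absent from your proposal, and the transversality route you indicate for it would not close.
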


The next result then follows naturally.

\begin{corollary}\label{3}
Let $(M,g)$ be  a non-compact spacetime of dimension $\geq 3$. Then, if the basic-open sets of the Alexandrov topology are precompact, the topology $T^0$ is uniformizably complete.
\end{corollary}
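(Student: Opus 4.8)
The plan is to let the precompactness hypothesis on the Alexandrov basis propagate, through global hyperbolicity and the metrizability of $\mathcal{C}$, into uniformizable completeness of $T^0$, by chaining Theorem \ref{Minguzzi}, Proposition \ref{L2}, and the result of Page on paracompact uniformizable spaces that is already invoked earlier in this note.

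First I would unwind the hypothesis. The basic-open sets of $T_A$ are the chronological diamonds $\langle p,q\rangle = I^+(p)\cap I^-(q)$, so to say that each is precompact is to say that its closure is compact. I would then identify these compact closures with the (closed) causal diamonds $J^+(p)\cap J^-(q)$ of Theorem \ref{Minguzzi}, concluding that the causal diamonds are compact. With $M$ non-compact and of dimension $\geq 3$ by hypothesis, Theorem \ref{Minguzzi} then applies and yields that $M$ is globally hyperbolic.

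Next I would feed global hyperbolicity into Proposition \ref{L2}, which gives that $\mathcal{C}$, equipped with $T^0$, is metrizable. A metrizable space is $T_1$ and completely regular, hence Tychonoff and so uniformizable, and by Stone's theorem it is paracompact. Page's result, quoted earlier from \cite{Page}, that every paracompact uniformizable space is topologically (that is, uniformizably) complete, then furnishes a complete uniformity compatible with $T^0$, which is exactly the assertion of the corollary. This last leg is purely topological and mirrors the reasoning already used in the paper for $T_A(X)$.

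The main obstacle is the first step: bridging ``precompact basic-open sets'', which are closures of the \emph{open} chronological diamonds, to ``compact causal diamonds'' in the precise sense of Theorem \ref{Minguzzi}. One must fix the topology in which precompactness is measured---Theorem \ref{Minguzzi} reads compactness in $T_A$, whereas a spacetime analyst would naturally read it in the manifold topology $T_M$---and one must supply a causality condition guaranteeing that the closure of a chronological diamond coincides with the corresponding causal diamond, the identity $\overline{I^{\pm}}=J^{\pm}$ of causal simplicity being sufficient. Once the two notions are reconciled (note that global hyperbolicity, once obtained, forces strong causality and hence $T_A=T_M$, so the distinction collapses a posteriori) and this identification is secured, the remaining implications follow directly from the cited theorems.
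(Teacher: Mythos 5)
Your proposal is correct and takes essentially the same route as the paper's own proof: Theorem \ref{Minguzzi} to obtain global hyperbolicity from the precompactness hypothesis, Proposition \ref{L2} to get metrizability of $T^0$, and then uniformizable completeness. You merely make explicit two steps the paper compresses --- justifying ``metrizable $\Rightarrow$ uniformizably complete'' via Stone's theorem plus Page's result on paracompact uniformizable spaces (the same argument the paper uses earlier for $T_A(X)$), and flagging the identification of precompact chronological diamonds with the compact causal diamonds of Theorem \ref{Minguzzi}, a point the paper's one-line proof glosses over as well.
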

\begin{proof}
Precompactness of the Alexandrov basic-open sets gives strong causality (see Theorem \ref{Minguzzi}) and global hyperbolicity in $M$ is equivalent to the metrizability of $T^0$. The metrizability of $T^0$ implies its uniformizable completeness.
\end{proof}

Corollary \ref{3} raises the following questions; under what conditions will the implication be an equivalence? In other words, under what conditions will the uniformizable completeness of $T^0$ imply the precompactness of the Alexandrov basic-open sets? More interestingly: are there natural conditions which give an equivalence between uniform completeness of the Alexandrov topology and uniform completeness of $T^0$? Answers to these questions might shed a light to the question: under what conditions, in a spacetime,  is geodesic completeness equivalent to uniformizable completeness or, in particular, metric completeness? It is evident that the naive analogue of Hopf-Rinow theorem fails to exist in Lorentzian manifolds, and since it is more reasonable to consider global hyperbolicity as a condition that two events, that are chronologically related, can be joined by a maximal timelike geodesic, it might be that Corollary \ref{3} gives some hints about the way that one should follow towards this discussion. 

\section{Summazing Questions.}
The inconcistency that we showed that appears in the proof of Theorem \ref{1} inspires us to consider the following. The role of uniform completeness in spacetime and how it differs from metric completeness could be of interest; in particular, the space $\mathcal{C}$ of endless causal curves is not metric complete in its natural topology, in general; how one should understand this? Are there singular missing points in the discussion found in \cite{McW}? More interestingly, what about the obstruction to topology change argument, presented by Geroch \cite{Geroch}; does it, in fact, resolve the contradiction?

\end{document}